\documentclass[11pt]{article}

\usepackage{amsmath, amssymb, amsfonts}
\usepackage{geometry}
\usepackage{hyperref}
\usepackage{booktabs}
\usepackage{graphicx}
\usepackage{subcaption}
\usepackage{wrapfig}
\usepackage{parskip} 
\usepackage{tikz}
\usetikzlibrary{arrows.meta,positioning,fit,calc,backgrounds}

\usepackage{xcolor}

\usepackage{hyperref}
\hypersetup{
    colorlinks=true,      
    citecolor=orange,     
    linkcolor=blue,       
    urlcolor=blue         
}

\usepackage{amsthm}
\newtheorem{lemma}{Lemma}
\newtheorem{proposition}{Proposition}
\newtheorem{theorem}{Theorem}

\title{Zero-shot adaptation to order book dynamics\thanks{Preprint. Draft version.}}
\author{Arip Asadulaev \\ \texttt{arip.asadulaev@mbzuai.ac.ae}}
\date{}

\begin{document}

\maketitle

\begin{abstract}
We describe an adaptive market-making architecture that preserves the analytical structure of the Avellaneda--Stoikov framework while introducing a successor measure-style adaptation mechanism. In our paper we keep Avellaneda--Stoikov fast Hamilton--Jacobi--Bellman structure and make it adaptive to changing market regimes and trading objectives. The central idea is to separate market dynamics from the trading objective. The market state determines a low-dimensional set of Avellaneda--Stoikov parameters, while recent realized rewards determine a low-dimensional objective vector. The HJB forward map then converts this objective into optimal bid and ask quotes through a scalarization of future reward features.
\end{abstract}
\section{Introduction}
\begin{wrapfigure}{r}{0.55\textwidth}
  \vspace{-5mm}
  \includegraphics[width=0.55\textwidth]{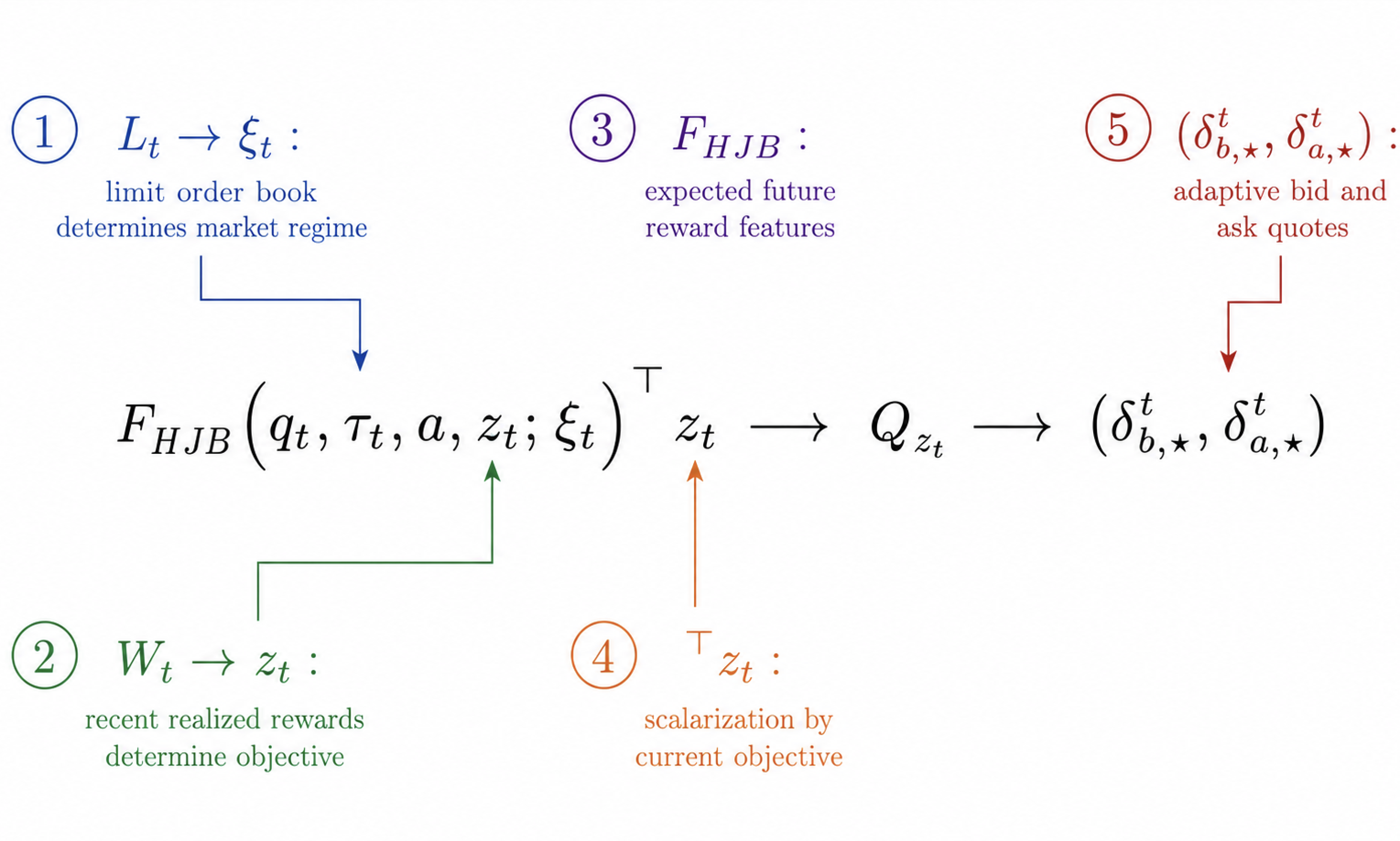}
  \caption{
    Conceptual of the proposed adaptive market making. The market state determines local dynamics, recent realized rewards determine the adaptive objective, and the HJB-forward representation converts both into bid and ask quotes.
    }
  \vspace{-5mm}
\end{wrapfigure}
Classical Avellaneda--Stoikov market making \cite{avellaneda2008high} is built around a small number of interpretable parameters: volatility, order arrival intensities, inventory risk aversion, and time to horizon. This makes the model fast and structurally robust. However, the objective is usually fixed in advance. If the desired inventory target, risk appetite, or adverse-selection penalty changes, the model must be reinterpreted or recalibrated.

Recently, in AI and especially in the reinforcement learning community, a range of zero-shot algorithms such as Forward-Backward representations \cite{touati2021learning, tirinzoni2025zero, asadulaev2026zero} has been developed. The core concept of these methods is a \emph{separation of the general environment dynamics from the goals you want to achieve based on these given dynamics}. In our paper, we show that this point of view transfers well to the market problem \cite{touati2021learning}.

We propose an approach that keeps the Avellaneda--Stoikov dynamics as the physical core, but replaces the fixed objective by an adaptive reward vector. The model has two low-dimensional adaptive components. The first one estimates the current market regime, denoted by $\xi_t$, from the limit order book state. The second one estimates the current trading objective, denoted by $z_t$, from recent realized rewards. The policy is then obtained by solving or approximating an Hamilton-Jacobi-Bellman (HJB)-style forward representation\cite{touati2021learning} and scalarizing it by the current objective vector (See Figure 1):
\[
Q_{z_t}(s_t,a)
=
F_{\mathrm{HJB}}(q_t,\tau_t,a,z_t;\xi_t)^\top z_t.
\]
Our $F_{\mathrm{HJB}}$ is fully based on the Avellaneda--Stoikov structure, we will see it later. The scalar action-value is only the inner product between the HJB successor-feature representation and the current objective vector. The complete architecture can be summarized as
\[
L_t
\longmapsto
\xi_t,
\qquad
\mathcal W_t
\longmapsto
z_t,
\qquad
(\xi_t,z_t)
\longmapsto
F_{\mathrm{HJB}}^\top z_t
\longmapsto
(\delta_t^{b,\star},\delta_t^{a,\star}).
\]
Here \(L_t\) is the current market microstructure state and \(\mathcal W_t\) is a recent window of completed observations with realized reward labels.

\section{Market State and Avellaneda--Stoikov Dynamics}

The state of the market maker is $s_t=(m_t,q_t,X_t,\tau_t,L_t)$, where \(m_t\) is the mid-price, \(q_t\) is inventory, \(X_t\) is cash, \(\tau_t=T-t\) is time to horizon, and \(L_t\) contains limit order book features such as spread, imbalance, depth, recent trade flow, realized volatility, queue information, and toxicity indicators. The action is $a_t=(\delta_t^b,\delta_t^a,v_t^b,v_t^a)$, where the bid and ask quotes are $p_t^b=m_t-\delta_t^b$, and $p_t^a=m_t+\delta_t^a$. In the simplest and most Avellaneda--Stoikov-like version, the order sizes are fixed, $v_t^b=v_t^a=v$, and the control is reduced to the choice of bid and ask distances
\[
a_t=(\delta_t^b,\delta_t^a).
\]
The mid-price follows the Avellaneda--Stoikov diffusion $dm_t=\sigma_t dW_t$. Bid and ask executions are modeled through exponential arrival intensities
\[
\lambda_t^b(\delta^b)
=
A_t^b e^{-\kappa_t^b\delta^b},
\qquad
\lambda_t^a(\delta^a)
=
A_t^a e^{-\kappa_t^a\delta^a}.
\]
The difference from the classical model is that the parameters are no longer fixed constants. They are inferred from the current market state: $
\xi_t=\Xi_\psi(L_t)$, with
\[
\xi_t
=
\left(
\sigma_t,
A_t^b,
A_t^a,
\kappa_t^b,
\kappa_t^a,
c_t^b,
c_t^a
\right).
\]
The quantities \(c_t^b\) and \(c_t^a\) are local adverse-selection costs. If buying through a bid fill is currently toxic, then \(c_t^b\) increases. If selling through an ask fill is currently toxic, then \(c_t^a\) increases. These costs enter the quote equations directly and push dangerous quotes farther away from the mid-price. The inventory transition is determined by the executed bid and ask orders:
\[
q_{t+1}
=
q_t
+
v_t^b N_{t+1}^b
-
v_t^a N_{t+1}^a,
\]
where \(N_{t+1}^b\) and \(N_{t+1}^a\) are execution indicators. The mark-to-market wealth is $W_t=X_t+q_t m_t$,
and the one-step PnL is $\Delta W_{t+1}=W_{t+1}-W_t$.

\section{Linear Return Representation and Adaptive Objective}

The reward is represented as a linear functional of outcome features:
\[
r_z(s,a,s')
=
z^\top \phi(s,a,s').
\]
A minimal feature vector for adaptive market making is
\[
\phi(s_t,a_t,s_{t+1})
=
\begin{pmatrix}
\Delta W_{t+1}\\
q_{t+1}\\
q_{t+1}^2\\
\mathrm{adv}_{t+1}
\end{pmatrix},
\]
where \(\mathrm{adv}_{t+1}\) is an adverse-selection or markout-loss. The corresponding objective vector is
\[
z_t
=
\begin{pmatrix}
z_{\mathrm{pnl},t}\\
z_{q,t}\\
z_{q^2,t}\\
z_{\mathrm{adv},t}
\end{pmatrix}.
\]

A target-inventory objective is recovered as a special case. Suppose that the desired inventory is \(\theta_t\), the inventory penalty is \(\lambda_t\), and the adverse-selection penalty is \(\nu_t\). The reward is
\[
r_{\theta_t}
=
\Delta W
-
\lambda_t(q'-\theta_t)^2
-
\nu_t \mathrm{adv}.
\]
Expanding the quadratic term gives
$
-\lambda_t(q'-\theta_t)^2
=
-\lambda_t (q')^2
+
2\lambda_t\theta_t q'
-
\lambda_t\theta_t^2
$. The last term does not depend on the action and can be ignored for control. Hence the reward is equivalent, for action selection, to $
r_{\theta_t}
\sim
\Delta W
+
2\lambda_t\theta_t q'
-
\lambda_t(q')^2
-
\nu_t\mathrm{adv}
$. Therefore the target-inventory objective corresponds to
\[
z(\theta_t,\lambda_t,\nu_t)
=
\begin{pmatrix}
1\\
2\lambda_t\theta_t\\
-\lambda_t\\
-\nu_t
\end{pmatrix}.
\]

Conversely, if an adaptive vector \(z_t\) has already been estimated from data, one can recover the implied risk penalty and target inventory from its inventory components. Since
\[
z_{q,t}=2\lambda_t\theta_t,
\qquad
z_{q^2,t}=-\lambda_t,
\]
we obtain
\[
\lambda_t=-z_{q^2,t},
\qquad
\theta_t
=
\frac{z_{q,t}}{-2z_{q^2,t}}.
\]
Thus the target inventory need not be imposed externally. It can be inferred from recent realized rewards. Our adaptation is introduced by estimating \(z_t\) from recent data. Let
\[
\mathcal W_t
=
\{i:t-H_{\mathrm{roll}}\le i\le t-H\}
\]
be a recent window of observations for which the reward can already be labeled on a markout horizon \(H\). For each past observation, define a realized reward label
\[
r_i^H
=
N_i^b v_i^b
\left(
\delta_i^b+m_{i+H}-m_i
\right)
+
N_i^a v_i^a
\left(
\delta_i^a-(m_{i+H}-m_i)
\right)
-
\mathrm{fees}_i
-
\mathrm{impact}_i.
\]
The backward representation
\[
B_\psi(s_i,a_i)\in\mathbb R^d
\]
maps a past state-action pair into the same reward-coordinate space as \(z_t\). The adaptive objective is then estimated by
\[
\widehat z_t
=
\sum_{i\in\mathcal W_t}
w_{t,i}
r_i^H B_\psi(s_i,a_i),
\]
where the weights may be exponentially decaying:
\[
w_{t,i}
=
\frac{
\exp(-(t-i)/\ell)
}{
\sum_{j\in\mathcal W_t}\exp(-(t-j)/\ell)
}.
\]
Equivalently,
\[
\widehat z_t
=
\mathbb E_{\rho_t}
\left[
r^H(s,a)B_\psi(s,a)
\right],
\]
where \(\rho_t\) is the recent local empirical distribution. This is the FB adaptation step. Recent realized rewards select the reward direction that is currently being paid by the market.

\section{HJB Forward Representation and Scalarization}

The HJB-forward representation is the central value object of the model: $F_{\mathrm{HJB}}(q,\tau,a,z;\xi)\in\mathbb R^d$. It represents expected future reward features under the Avellaneda--Stoikov dynamics with market parameters \(\xi\), when the current action is \(a\) and future actions are optimal for the objective \(z\):
\[
F_{\mathrm{HJB}}(q,\tau,a,z;\xi)
=
\mathbb E^{\pi_z,\xi}
\left[
\sum_{k\ge 0}
\beta^k
\phi_{t+k}
\ \middle|\
q_t=q,\ a_t=a
\right].
\]
The scalar action-value is obtained by the inner product
\[
Q_z(q,\tau,a;\xi)
=
F_{\mathrm{HJB}}(q,\tau,a,z;\xi)^\top z.
\]
The policy is
\[
\pi_z(q,\tau;\xi)
=
\arg\max_{a\in\mathcal A_{\mathrm{safe}}}
F_{\mathrm{HJB}}(q,\tau,a,z;\xi)^\top z.
\]

This formulation keeps the nonlinearity of the HJB problem because the forward representation is indexed by \(z\). The future policy depends on the objective, and therefore the successor features depend on the objective as well. The model does not assume that a single objective-independent successor map is sufficient for all reward functions. Instead, it keeps the Avellaneda--Stoikov HJB structure and uses \(z\) as the low-dimensional control parameter that changes the objective. Let $U_z(\tau,q;\xi)\in\mathbb R^d$ be the vector value function. The vector Bellman recursion is
\[
F_{\mathrm{HJB}}(\tau,q,a,z;\xi)
=
\bar\phi(q,a;\xi)
+
\beta
\mathbb E_{\xi}
\left[
U_z(\tau-\Delta t,q')
\mid q,a
\right],
\]
and the state value satisfies
\[
U_z(\tau,q;\xi)
=
F_{\mathrm{HJB}}(\tau,q,\pi_z(\tau,q;\xi),z;\xi).
\]
The scalar HJB value is only a projection:
\[
h_z(\tau,q;\xi)
=
U_z(\tau,q;\xi)^\top z.
\]
Thus the model may be viewed as a vector-valued HJB in reward-feature space, followed by a scalarization using the current adaptive objective.

\section{Closed-Form Avellaneda--Stoikov Quoting Rule}

Under exponential fill intensities and fixed order size \(v\), the optimal quote distances can be written in an Avellaneda--Stoikov-like closed form. Define the scalar continuation value
\[
h_t(q)
=
h_{z_t}(\tau_t,q;\xi_t)
=
U_{z_t}(\tau_t,q;\xi_t)^\top z_t.
\]
A bid fill moves inventory from \(q_t\) to \(q_t+v\). An ask fill moves inventory from \(q_t\) to \(q_t-v\). The optimal bid distance is
\[
\delta_t^{b,\star}
=
\frac{1}{\kappa_t^b}
-
\frac{
h_t(q_t+v)-h_t(q_t)
}{v}
+
\frac{c_t^b}{v}.
\]
The optimal ask distance is
\[
\delta_t^{a,\star}
=
\frac{1}{\kappa_t^a}
-
\frac{
h_t(q_t-v)-h_t(q_t)
}{v}
+
\frac{c_t^a}{v}.
\]
The final quotes are $p_t^b=m_t-\delta_t^{b,\star}$, and $p_t^a=m_t+\delta_t^{a,\star}$. These equations have the same structure as classical Avellaneda--Stoikov quotes. The first term is the liquidity term, determined by the local fill decay parameter. The second term is the inventory continuation-value term. The third term is a local adverse-selection correction. The difference is that the continuation value is not computed for a fixed objective. It is computed through
\[
h_t(q)
=
U_{z_t}(\tau_t,q;\xi_t)^\top z_t,
\]
where \(z_t\) is adapted from recent reward evidence. If the estimated objective implies a positive target inventory, then the value of moving from \(q_t\) to \(q_t+v\) increases. The term $h_t(q_t+v)-h_t(q_t)$ becomes larger, and the bid distance decreases. The market maker buys more aggressively. If recent rewards imply a negative target inventory, the same mechanism shifts aggressiveness to the ask side. If the bid side is toxic, \(c_t^b\) increases and pushes the bid farther away, even if the inventory objective is long-biased.

\section{Online Algorithm}

At each market event, the model first maps the current book state into Avellaneda--Stoikov parameters: $\xi_t=\Xi_\psi(L_t)$.
The recent reward window is then used to estimate the current objective:
$\widehat z_t
=
\mathbb E_{\rho_t}
\left[
r^H(s,a)B_\psi(s,a)
\right]$.
The HJB-forward action-value is
$
Q_{\widehat z_t}(q_t,\tau_t,a;\xi_t)
=
F_{\mathrm{HJB}}(q_t,\tau_t,a,\widehat z_t;\xi_t)^\top \widehat z_t$. The action is selected by
\[
a_t^\star
=
\arg\max_{a\in\mathcal A_{\mathrm{safe}}}
F_{\mathrm{HJB}}(q_t,\tau_t,a,\widehat z_t;\xi_t)^\top \widehat z_t.
\]
In the closed-form Avellaneda--Stoikov version, this maximization is replaced by the quote equations
\[
\delta_t^{b,\star}
=
\frac{1}{\kappa_t^b}
-
\frac{
h_{\widehat z_t}(\tau_t,q_t+v;\xi_t)
-
h_{\widehat z_t}(\tau_t,q_t;\xi_t)
}{v}
+
\frac{c_t^b}{v},
\]
\[
\delta_t^{a,\star}
=
\frac{1}{\kappa_t^a}
-
\frac{
h_{\widehat z_t}(\tau_t,q_t-v;\xi_t)
-
h_{\widehat z_t}(\tau_t,q_t;\xi_t)
}{v}
+
\frac{c_t^a}{v}.
\]
The quotes sent to the market are $p_t^b=m_t-\delta_t^{b,\star}$, and $p_t^a=m_t+\delta_t^{a,\star}$. The essential online loop is therefore
\[
L_t
\longmapsto
\xi_t,
\qquad
\mathcal W_t
\longmapsto
z_t,
\qquad
(\xi_t,z_t)
\longmapsto
h_{z_t},
\qquad
h_{z_t}
\longmapsto
(\delta_t^{b,\star},\delta_t^{a,\star}).
\]
This preserves the speed and interpretability of Avellaneda--Stoikov while allowing the objective to move with the market. Proofs and analyses of the basic properties are given in Appendix \ref{app:basic-proofs}.
%


\section{Experiments}
\label{sec:experiments}
We evaluate three market-making policies on the same reconstructed BTCUSDT
limit-order-book dataset: a Guéant--Lehalle--Fernandez-Tapia (GLFT)
fixed-latency strategy \cite{gueant2017optimal}, a fixed-latency Avellaneda--Stoikov strategy, and our
rolling forward-backward Avellaneda--Stoikov HJB method. Please see full algorithm implementation details in Appendix \ref{sec:implementation}. All experiments use the same market data, preprocessing pipeline, latency model, execution simulator,
fee model, and evaluation window. The purpose of this design is to isolate the
effect of the quoting policy rather than differences in data construction,
latency assumptions, or backtest mechanics.

\subsection{Dataset and Backtest Setup}
\label{subsec:dataset-backtest-setup}
\begin{figure}[t!]
    \centering
    \includegraphics[width=\linewidth]{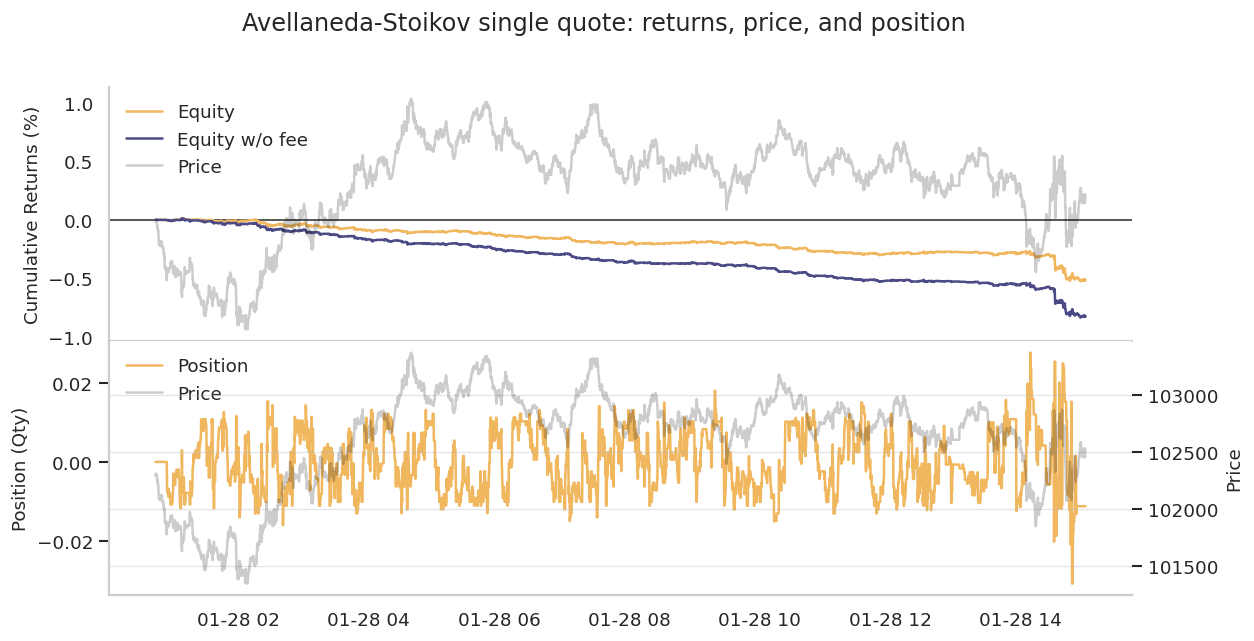}
    \caption{AS single diagnostic.}
    \label{fig:equity-position-summary}
\end{figure}
The evaluation dataset is constructed from raw BTCUSDT order-book and trade
records. We preprocess the feed into the \texttt{hftbacktest}\cite{nkaz001_hftbacktest_2024} event format using
approximate trade reconstruction and depth updates. The final dataset combines
the available January slices and is truncated at $\texttt{2025-01-28 15:00:00 UTC}$
to avoid the later regime break observed in the raw files. The resulting
evaluation feed contains approximately \(12.6\) million events over
\(855.56\) minutes, including approximately \(11.91\) million depth events,
\(678.6\) thousand trade events, and \(8\) thousand snapshots.

We also construct an empirical fixed-latency feed from the available
order-response timestamps. Latencies are capped at one second for numerical
stability. The resulting latency dataset contains \(50{,}090\) observations,
with median entry latency approximately \(570.6\) ms and median response latency
approximately \(427.9\) ms.

All strategies are evaluated in \texttt{hftbacktest}  using the same exchange
model: post-only limit orders, a power-probability queue model, no-partial-fill
exchange mode, tick size \(0.1\), lot size \(10^{-5}\), and the same maker/taker
fee configuration. We use a linear asset accounting model and report standard
high-frequency backtest metrics: cumulative return, Sharpe ratio, Sortino
ratio, maximum drawdown, daily number of trades, daily turnover, return over
maximum drawdown, return per trade, and maximum position value. Maximum drawdown
is reported as a positive magnitude. For each method, we save both summary
equity and position plots and diagnostic plots for volatility, order-arrival
parameters, quote distances, and adaptive variables.

\subsection{Baselines}
\label{subsec:baselines}
The first baseline is the GLFT market-making model. Following the standard
\texttt{hftbacktest} implementation, the strategy estimates order-arrival
intensity parameters from recent market data and quotes around the mid price
using the GLFT reservation-price and spread formulas. We use the same
fixed-latency data as in the other methods. We implemented both the single-quote
and grid variants; the main results table reports the grid variant. The grid
variant places multiple orders around the model-implied quote, while the
single-quote variant submits one bid and one ask. The second baseline is a fixed-latency Avellaneda--Stoikov strategy. This
baseline uses the same rolling volatility and order-arrival estimates as GLFT,
but replaces the GLFT quote equations with the Avellaneda--Stoikov structure.
The model computes an inventory-dependent reservation-price skew and
risk-adjusted spread terms under a Brownian mid-price approximation and Poisson
order arrivals. As with GLFT, we implemented both a single-quote version and a
grid version, and the main results table reports the grid version.
\begin{figure}[t!]
    \centering
    \includegraphics[width=\linewidth]{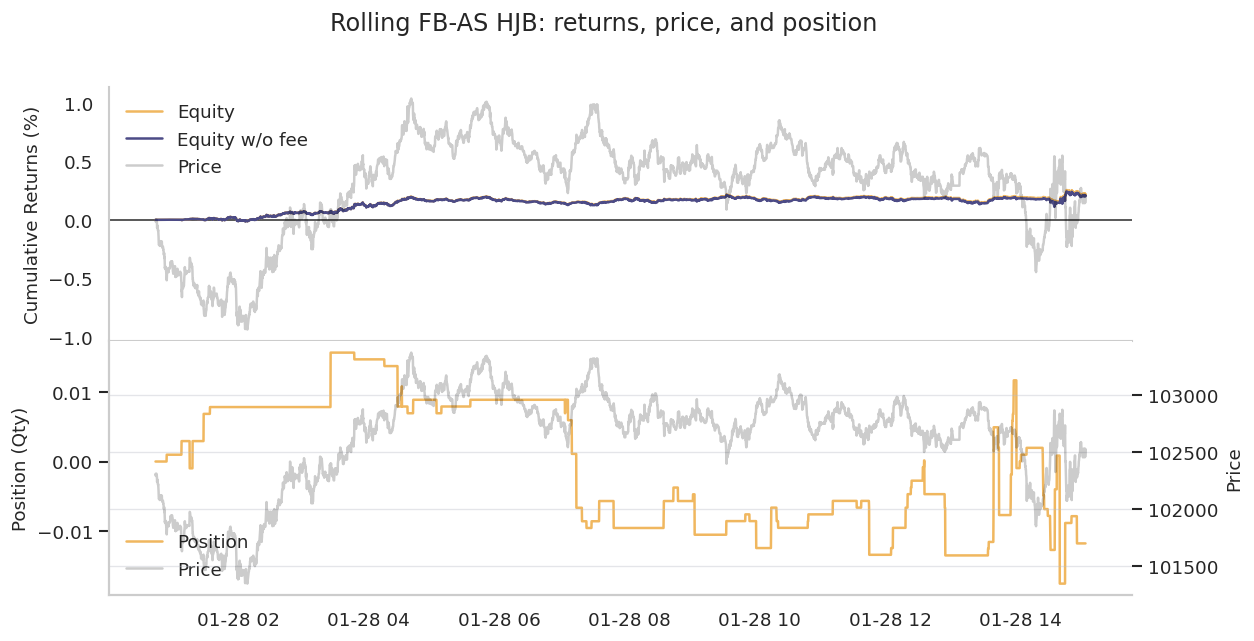}
    \caption{FB-AS diagnostics.}
    \label{fig:fb-as-diagnostics}
\end{figure}
\begin{table}[t!]
    \centering
    \caption{
        Main backtest results on the common BTCUSDT evaluation window.
        Maximum drawdown is reported as a positive magnitude.
    }
    \label{tab:main-results}
    \begin{tabular}{lrrrrr}
        \toprule
        Method
        & Return
        & Sharpe
        & Max Drawdown
        & Daily Trades
        & Daily Turnover \\
        \midrule
        GLFT grid
        & \(-0.001782\)
        & \(-27.88\)
        & \(0.002677\)
        & \(4637.82\)
        & \(46.32\) \\
        AS grid
        & \(-0.001752\)
        & \(-27.26\)
        & \(0.002848\)
        & \(4531.08\)
        & \(45.24\) \\
        Rolling FB-AS HJB
        & \(0.002167\)
        & \(28.57\)
        & \(0.000947\)
        & \(477.85\)
        & \(4.77\) \\
        \bottomrule
    \end{tabular}
\end{table}
\subsection{Results}
\label{subsec:results}

Table~\ref{tab:main-results} reports the main comparison on the common
evaluation window. The grid variants of GLFT and Avellaneda--Stoikov both
produce negative returns, while the rolling FB-AS HJB model produces positive
return and substantially lower drawdown.

The rolling FB-AS HJB strategy trades much less frequently than the GLFT and
Avellaneda--Stoikov grid baselines, but achieves better risk-adjusted
performance on this evaluation window. The reduction in trading intensity
suggests that the adaptive objective and finite-horizon HJB action selection act
as a filter against over-trading under noisy short-horizon conditions. In
particular, the estimated objective dynamically adjusts inventory and
adverse-selection penalties, while the HJB recursion evaluates candidate quotes
through expected future feature occupancy rather than immediate spread capture
alone.

These results should be interpreted as a controlled simulation study rather than
production evidence. The current implementation uses approximate fill
attribution from net position changes, omits explicit fee and impact terms in
the rolling reward labels, and uses reconstructed rather than native full
exchange event data. Nevertheless, because all strategies share the same market
data, latency feed, exchange model, fee model, and preprocessing pipeline, the
comparison is informative about the relative behavior of the quoting rules under
identical backtest conditions.

\section{Conclusion}
\label{sec:conclusion}

We proposed an adaptive market-making framework that preserves the analytical structure of the Avellaneda--Stoikov model while allowing both the market regime and the trading objective to change over time. The central idea is to separate local market dynamics from the objective being optimized. The current limit-order-book state determines the Avellaneda Stoikov parameter vector \(\xi_t\), while recent realized rewards determine an adaptive objective vector \(z_t\). These two quantities are then combined through an HJB-forward representation, and the resulting scalarized value function produces bid and ask quotes. The empirical study on reconstructed BTCUSDT order-book data shows that the rolling FB-AS HJB strategy can behave differently from fixed-objective GLFT and Avellaneda-Stoikov baselines. A production implementation would require native exchange event data, stronger calibration of fill probabilities, more detailed cost modeling, and extensive robustness checks across assets and market regimes.

Several directions remain open. First, the backward representation \(B_\psi\) can be learned more systematically via neural approximations \cite{touati2021learning, asadulaev2026zero} rather than specified through compact hand-crafted coordinates. Second, the market-state map \(\Xi_\psi(L_t)\) can be extended with richer microstructure features. Third, the finite-horizon HJB approximation can be generalized to larger inventory grids, variable order sizes, and multi-asset inventory constraints. Finally, the adaptive objective projection step can be strengthened with explicit risk limits and out-of-distribution penalties.

Overall, the paper shows that zero-shot reward adaptation ideas can be integrated naturally with classical market-making models. By keeping Avellaneda--Stoikov dynamics as the physical core and adapting only a low-dimensional objective vector, the proposed framework provides an interpretable bridge between analytical market making and representation-based reinforcement learning.

\bibliographystyle{plain} 
\bibliography{references}

@article{avellaneda2008high,
  title={High-frequency trading in a limit order book},
  author={Avellaneda, Marco and Stoikov, Sasha},
  journal={Quantitative Finance},
  volume={8},
  number={3},
  pages={217--224},
  year={2008},
  publisher={Taylor \& Francis}
}

@article{gueant2017optimal,
  title={Optimal market making},
  author={Gu{\'e}ant, Olivier},
  journal={Applied Mathematical Finance},
  volume={24},
  number={2},
  pages={112--154},
  year={2017},
  publisher={Taylor \& Francis}
}

@article{touati2021learning,
  title={Learning one representation to optimize all rewards},
  author={Touati, Ahmed and Ollivier, Yann},
  journal={Advances in Neural Information Processing Systems},
  volume={34},
  pages={13--23},
  year={2021}
}

@software{nkaz001_hftbacktest_2024,
  author       = {nkaz001 https://github.com/nkaz001/hftbacktest },
  title        = {HftBacktest},
  year         = {2024},
  url          = {https://github.com/nkaz001/hftbacktest},
  version      = {2.x}
}

@article{asadulaev2026zero,
  title={Zero-Shot Off-Policy Learning},
  author={Asadulaev, Arip and Bobrin, Maksim and Lahlou, Salem and Dylov, Dmitry and Karray, Fakhri and Takac, Martin},
  journal={arXiv preprint arXiv:2602.01962},
  year={2026}
}

@article{tirinzoni2025zero,
  title={Zero-shot whole-body humanoid control via behavioral foundation models},
  author={Tirinzoni, Andrea and Touati, Ahmed and Farebrother, Jesse and Guzek, Mateusz and Kanervisto, Anssi and Xu, Yingchen and Lazaric, Alessandro and Pirotta, Matteo},
  journal={arXiv preprint arXiv:2504.11054},
  year={2025}
}
\newpage
\appendix

\section{Practical Specifications}
\subsection{Stabilization and Corrections}

The pure FB estimate
\[
\widehat z_t
=
\mathbb E_{\rho_t}
\left[
r^H(s,a)B_\psi(s,a)
\right]
\]
can be noisy, especially in short windows. A stable production version should therefore mix it with a safe prior objective:
\[
z_{\mathrm{prior}}
=
\begin{pmatrix}
1\\
0\\
-\lambda_0\\
-\nu_0
\end{pmatrix}.
\]
This prior corresponds to an inventory-neutral Avellaneda--Stoikov objective. The adapted objective is
\[
z_t
=
(1-\beta)z_{\mathrm{prior}}
+
\beta \widehat z_t.
\]
A projection can then enforce economically meaningful constraints:
\[
z_t
=
\Pi_{\mathcal Z}
\left[
(1-\beta)z_{\mathrm{prior}}
+
\beta \widehat z_t
\right],
\]
where
\[
\mathcal Z
=
\left\{
z:
z_{\mathrm{pnl}}=1,\ 
z_{q^2}\le -\lambda_{\min},\
z_{\mathrm{adv}}\le 0,\
|\theta(z)|\le Q_{\max}
\right\}.
\]
The implied target inventory is
\[
\theta(z)
=
\frac{z_q}{-2z_{q^2}}.
\]
The projection prevents the adaptive objective from becoming risk-seeking after a short lucky period.

If the backward representation is not perfectly whitened, the estimate
\[
\widehat z_t=\mathbb E_{\rho_t}[r^H B]
\]
can be replaced by a ridge projection. Let
\[
b_i=B_\psi(s_i,a_i),
\]
\[
C_t=
\sum_{i\in\mathcal W_t}
w_{t,i}b_i b_i^\top,
\qquad
u_t=
\sum_{i\in\mathcal W_t}
w_{t,i}r_i^H b_i.
\]
Then
\[
\widehat z_t
=
(C_t+\lambda_z I)^{-1}u_t.
\]
This is equivalent to the regularized local regression
\[
\widehat z_t
=
\arg\min_z
\sum_{i\in\mathcal W_t}
w_{t,i}
(r_i^H-z^\top b_i)^2
+
\lambda_z\|z\|^2.
\]

The weights may also be state-conditioned rather than purely time-decayed:
\[
w_{t,i}
\propto
\exp\left(-\frac{t-i}{\ell}\right)
K(L_t,L_i).
\]
For example,
\[
K(L_t,L_i)
=
\exp\left(
-\frac{
\|E_\psi(L_t)-E_\psi(L_i)\|^2
}{2h^2}
\right).
\]
This makes the objective estimate local not only in time but also in market-regime space:
\[
z_t
\approx
\mathbb E_{\rho(\cdot|L_t)}
\left[
r^H(s,a)B_\psi(s,a)
\right].
\]

A further stabilizer is exponential smoothing:
\[
z_t^{\mathrm{smooth}}
=
(1-\alpha_z)z_{t-1}^{\mathrm{smooth}}
+
\alpha_z z_t.
\]
The HJB-forward is then evaluated with \(z_t^{\mathrm{smooth}}\) instead of the raw estimate.

When the dominant adaptation is target inventory, it is useful to work with centered inventory:
\[
\widetilde q_t=q_t-\theta_t.
\]
The inventory penalty becomes
\[
-\lambda_t(q_t-\theta_t)^2
=
-\lambda_t \widetilde q_t^2.
\]
The value function can then be represented as
\[
h_{z_t}(\tau,q;\xi)
\approx
\widetilde h_{\lambda_t,\nu_t}(\tau,q-\theta_t;\xi).
\]
The quote equations become
\[
\delta_t^{b,\star}
=
\frac{1}{\kappa_t^b}
-
\frac{
\widetilde h_t(\widetilde q_t+v)
-
\widetilde h_t(\widetilde q_t)
}{v}
+
\frac{c_t^b}{v},
\]
\[
\delta_t^{a,\star}
=
\frac{1}{\kappa_t^a}
-
\frac{
\widetilde h_t(\widetilde q_t-v)
-
\widetilde h_t(\widetilde q_t)
}{v}
+
\frac{c_t^a}{v}.
\]
This is the most direct Avellaneda--Stoikov interpretation of target-inventory adaptation: the model replaces \(q_t\) by \(q_t-\theta_t\).

Finally, the chosen action can be constrained by an out-of-distribution penalty:
\[
a_t^\star
=
\arg\max_{a\in\mathcal A_{\mathrm{safe}}}
\left[
F_{\mathrm{HJB}}(q_t,\tau_t,a,z_t;\xi_t)^\top z_t
-
\lambda_{\mathrm{OOD}}D_{\mathrm{OOD}}(a|L_t)
\right].
\]
After quotes are computed, hard safety constraints are applied:
\[
\delta_{\min}
\le
\delta_t^{b/a}
\le
\delta_{\max},
\qquad
|q_t|\le Q_{\max},
\qquad
v_t^{b/a}\le v_{\max}.
\]
If inventory is close to a hard limit, one side of quoting can be disabled:
\[
q_t\ge Q_{\max}
\Rightarrow
v_t^b=0,
\qquad
q_t\le -Q_{\max}
\Rightarrow
v_t^a=0.
\]

\subsection{Final Specification}

The adaptive FB-AS market maker is fully summarized by the following equations:
\[
\xi_t=\Xi_\psi(L_t),
\]
\[
z_t
=
\Pi_{\mathcal Z}
\left[
(1-\beta)z_{\mathrm{prior}}
+
\beta
\mathbb E_{\rho_t}
\left[
r^H(s,a)B_\psi(s,a)
\right]
\right],
\]
\[
a_t^\star
=
\arg\max_{a\in\mathcal A_{\mathrm{safe}}}
F_{\mathrm{HJB}}(q_t,\tau_t,a,z_t;\xi_t)^\top z_t,
\]
\[
p_t^b=m_t-\delta_t^{b,\star},
\qquad
p_t^a=m_t+\delta_t^{a,\star}.
\]
In the closed-form Avellaneda--Stoikov implementation,
\[
\delta_t^{b,\star}
=
\frac{1}{\kappa_t^b}
-
\frac{
h_{z_t}(\tau_t,q_t+v;\xi_t)
-
h_{z_t}(\tau_t,q_t;\xi_t)
}{v}
+
\frac{c_t^b}{v},
\]
\[
\delta_t^{a,\star}
=
\frac{1}{\kappa_t^a}
-
\frac{
h_{z_t}(\tau_t,q_t-v;\xi_t)
-
h_{z_t}(\tau_t,q_t;\xi_t)
}{v}
+
\frac{c_t^a}{v},
\]
where
\[
h_{z_t}(\tau,q;\xi_t)
=
U_{z_t}(\tau,q;\xi_t)^\top z_t.
\]

The model is therefore an Avellaneda--Stoikov market maker with adaptive objectives. The market regime \(\xi_t\) determines the local dynamics, the FB-style vector \(z_t\) determines what the market maker currently wants, and the HJB-forward representation converts the pair \((\xi_t,z_t)\) into optimal quotes. The analytical structure of Avellaneda--Stoikov is preserved, but the objective is no longer static. It is continuously inferred from recent reward evidence and projected back into a safe, interpretable family of trading objectives.

\section{Proofs of Basic Properties}
\label{app:basic-proofs}

This appendix collects elementary properties used in the adaptive Avellaneda--Stoikov construction. 
Throughout, we use the feature vector
\[
\phi(s,a,s')
=
\begin{pmatrix}
\Delta W\\
q'\\
(q')^2\\
\mathrm{adv}
\end{pmatrix},
\]
and the linear reward
\[
r_z(s,a,s')=z^\top\phi(s,a,s').
\]

\begin{lemma}[Target-inventory rewards are linear in the feature vector]
\label{lem:target-linear-app}
Let
\[
\phi(s,a,s')=
\begin{pmatrix}
\Delta W\\
q'\\
(q')^2\\
\mathrm{adv}
\end{pmatrix}.
\]
For any inventory target \(\theta\), inventory penalty \(\lambda\ge 0\), and adverse-selection penalty \(\nu\ge 0\), the reward
\[
r_\theta(s,a,s')
=
\Delta W-\lambda(q'-\theta)^2-\nu\,\mathrm{adv}
\]
is equivalent for action selection to the linear reward \(r_z(s,a,s')=z^\top\phi(s,a,s')\) with
\[
z=
\begin{pmatrix}
1\\
2\lambda\theta\\
-\lambda\\
-\nu
\end{pmatrix}.
\]
\end{lemma}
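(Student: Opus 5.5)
The plan is to expand the quadratic penalty and then argue that an additive constant independent of the action does not change the optimal policy. First, expanding $(q'-\theta)^2=(q')^2-2\theta q'+\theta^2$ gives the pointwise identity
\[
r_\theta(s,a,s')
=
\Delta W+2\lambda\theta\,q'-\lambda (q')^2-\nu\,\mathrm{adv}-\lambda\theta^2
=
z^\top\phi(s,a,s')-\lambda\theta^2 .
\]
This holds for every transition $(s,a,s')$, with $z$ exactly the vector in the statement. So $r_\theta=r_z-c$ with the constant $c=\lambda\theta^2$. It depends only on the fixed parameters $(\lambda,\theta)$, not on $s$, $a$ or $s'$.

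Next I make ``equivalent for action selection'' precise at the level of action-values, not just one-step rewards. For any policy $\pi$ and discount $\beta\in(0,1)$, linearity of expectation gives
\[
Q^\pi_{r_\theta}(s,a)=Q^\pi_{r_z}(s,a)-\frac{c}{1-\beta}.
\]
In the finite-horizon version with $\tau/\Delta t$ remaining steps, the subtracted term is instead $c\sum_{k=0}^{n-1}\beta^k$ with $n$ the number of remaining steps. That term depends on $\tau$ but still not on $a$ or $q$. Since the shift is identical for all actions at a given state and time-to-horizon, the Bellman optimality recursions for $r_\theta$ and $r_z$ produce optimal $Q$-functions that differ by the same action-independent quantity. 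I would prove this by induction on the remaining steps, or by a fixed-point argument in the discounted case. Hence
\[
\arg\max_a Q^\star_{r_\theta}=\arg\max_a Q^\star_{r_z},
\]
and the two rewards induce the same optimal policies.

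I would also note the consequence for the closed-form quotes. Those quotes involve only the differences $h(q\pm v)-h(q)$, and a shift constant in $q$ cancels in these differences, so the quotes $\delta^{b,\star},\delta^{a,\star}$ coincide as well.

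The only delicate point is this second step: showing that the constant survives the recursion as a shift independent of both $a$ and $q$, rather than stopping at the one-step identity. In the finite-horizon case the shift depends on $\tau$. This is harmless because action selection happens at fixed $\tau$, but the induction has to carry the $\tau$-dependent constant explicitly. The hypotheses $\lambda,\nu\ge 0$ are not needed for the equivalence itself; they only make the sign conventions of $z$ economically meaningful.
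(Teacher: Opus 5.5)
Your proof is correct. Its first step is the paper's entire proof: expand $(q'-\theta)^2$, read off $z$, and note that the leftover $-\lambda\theta^2$ does not depend on the action. The paper stops there and reads ``equivalent for action selection'' as the one-step statement that an additive constant does not move the maximizer of the reward.

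Your second step is a genuine strengthening rather than a detour. You carry the constant through the Bellman recursion, getting $Q^\star_{r_\theta}=Q^\star_{r_z}-c\sum_{k=0}^{n-1}\beta^k$ (or $-c/(1-\beta)$ in the discounted case), and you observe that it cancels in $h(q\pm v)-h(q)$. This matters because the paper's policy maximizes the multi-step quantity $F_{\mathrm{HJB}}^\top z$, not a one-step reward. Your version is what actually licenses replacing $r_\theta$ by $z^\top\phi$ in the HJB policy and in the closed-form quotes. It combines directly with the paper's scalar-optimality theorem for $U_n^\top z$.

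Your induction goes through because the transition probabilities sum to one and the horizon is deterministic. Under action-dependent random termination, a constant reward shift could change optimal behaviour, so that assumption is worth stating explicitly.

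In short, the paper's route buys brevity, and yours buys a statement that matches how the objective is actually used. Your remark that $\lambda,\nu\ge 0$ play no role in the equivalence is also correct.
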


\begin{proof}
Expanding the quadratic inventory penalty gives
\[
-\lambda(q'-\theta)^2
=
-\lambda(q')^2+2\lambda\theta q'-\lambda\theta^2.
\]
Therefore,
\[
r_\theta(s,a,s')
=
\Delta W
+
2\lambda\theta q'
-
\lambda(q')^2
-
\nu\,\mathrm{adv}
-
\lambda\theta^2.
\]
The final term \(-\lambda\theta^2\) does not depend on the action. Hence it does not change the action that maximizes the reward. Thus, for action selection,
\[
r_\theta(s,a,s')
\sim
\Delta W
+
2\lambda\theta q'
-
\lambda(q')^2
-
\nu\,\mathrm{adv}.
\]
This is exactly \(z^\top\phi(s,a,s')\) with
\[
z=
\begin{pmatrix}
1\\
2\lambda\theta\\
-\lambda\\
-\nu
\end{pmatrix}.
\]
\end{proof}

\begin{lemma}[Implied risk penalty and target inventory]
\label{lem:implied-target-app}
Suppose \(z_{q^2}<0\). If the adaptive objective vector \(z\) is interpreted as a target-inventory objective, then its implied inventory penalty and target inventory are
\[
\lambda(z)=-z_{q^2},
\qquad
\theta(z)=\frac{z_q}{-2z_{q^2}}.
\]
\end{lemma}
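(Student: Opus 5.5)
The plan is to invert the parametrization of Lemma~\ref{lem:target-linear-app}. First we fix what "interpreted as a target-inventory objective" means. It means there exist $\lambda\ge 0$, $\theta\in\mathbb R$ and $\nu\ge 0$ such that the inventory components of $z$ agree with those of the vector produced by Lemma~\ref{lem:target-linear-app}, namely
\[
z_q = 2\lambda\theta,
\qquad
z_{q^2} = -\lambda.
\]
The normalization $z_{\mathrm{pnl}}=1$ and the adverse-selection coordinate $z_{\mathrm{adv}}=-\nu$ play no role in determining $\lambda$ and $\theta$. So we only need to solve this $2\times 2$ system in the unknowns $(\lambda,\theta)$. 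If one prefers not to assume $z_{\mathrm{pnl}}=1$, we first rescale $z$ by $1/z_{\mathrm{pnl}}$, assuming $z_{\mathrm{pnl}}>0$. This is harmless for action selection, because positive scaling does not change any $\arg\max$. We would remark on this but work with the normalized form.

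The key steps are as follows. (i) From the second equation read off $\lambda=-z_{q^2}$. The hypothesis $z_{q^2}<0$ gives $\lambda>0$, consistent with the requirement $\lambda\ge 0$ in Lemma~\ref{lem:target-linear-app}. (ii) Substitute into the first equation to get $z_q=2\lambda\theta=-2z_{q^2}\theta$. Since $-2z_{q^2}\neq 0$, divide to obtain $\theta=z_q/(-2z_{q^2})$. (iii) Verify the converse: plugging these $(\lambda,\theta)$ back into the vector of Lemma~\ref{lem:target-linear-app} reproduces $z_q$ and $z_{q^2}$ exactly. So the interpretation exists, and by the computation in (i)--(ii) it is unique.

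As an optional cross-check we can complete the square directly on the inventory part of the reward:
\[
z_q q' + z_{q^2}(q')^2 = z_{q^2}\Bigl(q'-\tfrac{z_q}{-2z_{q^2}}\Bigr)^2 + \text{const}.
\]
This displays the penalty $-z_{q^2}$ and the target $\theta(z)$ without reference to the lemma. The constant is action-independent, exactly as in the proof of Lemma~\ref{lem:target-linear-app}.

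There is no real obstacle; the argument is purely algebraic. The only point needing care is the role of the hypothesis $z_{q^2}<0$. It is what makes the division legitimate, since $z_{q^2}\neq 0$. It also makes the recovered penalty strictly positive, so the quadratic is a genuine concave penalty with a well-defined maximizer at $\theta(z)$. We would note that if $z_{q^2}=0$ with $z_q\neq 0$, no finite target exists, and if $z_{q^2}>0$ the "penalty" would be risk-seeking. This is why the projection set $\mathcal Z$ imposes $z_{q^2}\le-\lambda_{\min}$.
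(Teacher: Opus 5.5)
Your proposal is correct and follows the paper's proof: both read off $\lambda=-z_{q^2}$ from the inventory coordinates given by Lemma~\ref{lem:target-linear-app}, then substitute into $z_q=2\lambda\theta$ and divide, which $z_{q^2}<0$ makes legitimate. Your converse/uniqueness check and your completing-the-square cross-check are valid additions that the paper omits.
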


\begin{proof}
From Lemma~\ref{lem:target-linear-app}, a target-inventory objective has inventory coordinates
\[
z_q=2\lambda\theta,
\qquad
z_{q^2}=-\lambda.
\]
Therefore,
\[
\lambda=-z_{q^2}.
\]
Since \(z_{q^2}<0\), this gives \(\lambda>0\). Substituting \(\lambda=-z_{q^2}\) into \(z_q=2\lambda\theta\) gives
\[
z_q
=
2(-z_{q^2})\theta.
\]
Hence
\[
\theta
=
\frac{z_q}{-2z_{q^2}}.
\]
\end{proof}

\begin{proposition}[Closed form of the ridge objective estimator]
\label{prop:ridge-estimator-app}
Let \(b_i\in\mathbb R^d\), \(r_i^H\in\mathbb R\), \(w_i\ge 0\), and \(\lambda_z>0\). The regularized objective-estimation problem
\[
\widehat z_t
=
\arg\min_z
\sum_{i\in W_t} w_i(r_i^H-z^\top b_i)^2+\lambda_z\|z\|_2^2
\]
has the unique solution
\[
\widehat z_t
=
(C_t+\lambda_z I)^{-1}u_t,
\]
where
\[
C_t=\sum_{i\in W_t}w_i b_i b_i^\top,
\qquad
u_t=\sum_{i\in W_t}w_i r_i^H b_i.
\]
\end{proposition}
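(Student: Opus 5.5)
We will treat the objective
\[
J(z)=\sum_{i\in W_t} w_i\,(r_i^H-z^\top b_i)^2+\lambda_z\|z\|_2^2
\]
as a quadratic function of \(z\in\mathbb R^d\). We will show it is strictly convex, coercive and differentiable, and then solve its first-order condition.

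\textbf{Step 1: expand \(J\) into quadratic form.} Expanding each square gives
\[
J(z)=z^\top (C_t+\lambda_z I)z-2z^\top u_t+\sum_i w_i (r_i^H)^2 .
\]
Here \(C_t\) and \(u_t\) are exactly as defined in the statement.

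\textbf{Step 2: show \(C_t+\lambda_z I\) is positive definite.} For any \(x\in\mathbb R^d\),
\[
x^\top C_t x=\sum_i w_i (b_i^\top x)^2\ge 0,
\]
since \(w_i\ge 0\). Hence \(C_t\) is symmetric positive semidefinite. Because \(\lambda_z>0\), we get
\[
x^\top(C_t+\lambda_z I)x\ge \lambda_z\|x\|^2>0 \quad \text{for } x\neq 0 .
\]
So \(C_t+\lambda_z I\) is symmetric positive definite, and therefore invertible. This is the only step with any substance, and it is where the hypotheses \(w_i\ge0\) and \(\lambda_z>0\) are used.

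\textbf{Step 3: solve the first-order condition.} The gradient is
\[
\nabla J(z)=2(C_t+\lambda_z I)z-2u_t .
\]
Setting it to zero gives the unique stationary point \(z^\star=(C_t+\lambda_z I)^{-1}u_t\).

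\textbf{Step 4: certify \(z^\star\) as the unique global minimizer.} We avoid a separate convexity argument by completing the square. Writing \(M=C_t+\lambda_z I\), we have
\[
J(z)-J(z^\star)=(z-z^\star)^\top M (z-z^\star).
\]
By Step 2 this is \(\ge \lambda_z\|z-z^\star\|^2\), which is strictly positive for \(z\neq z^\star\). Hence \(z^\star\) is the unique minimizer, i.e. \(\widehat z_t=(C_t+\lambda_z I)^{-1}u_t\). No real obstacle is expected beyond keeping track of the symmetry of \(C_t\) when differentiating.
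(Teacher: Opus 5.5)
Your proof is correct and follows the same route as the paper's. Both compute $\nabla J(z)=2(C_t+\lambda_z I)z-2u_t$, solve the first-order condition, and use $x^\top(C_t+\lambda_z I)x=\sum_i w_i(x^\top b_i)^2+\lambda_z\|x\|_2^2>0$ to get invertibility. Your Step~4 identity $J(z)-J(z^\star)=(z-z^\star)^\top(C_t+\lambda_z I)(z-z^\star)$ makes explicit what the paper leaves implicit: the unique stationary point is in fact the global minimizer, which exists and is unique, and not merely the unique solution of $\nabla J=0$.
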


\begin{proof}
Define
\[
J(z)
=
\sum_{i\in W_t} w_i(r_i^H-z^\top b_i)^2+\lambda_z\|z\|_2^2.
\]
Differentiating with respect to \(z\) gives
\[
\nabla J(z)
=
-2\sum_{i\in W_t}w_i(r_i^H-z^\top b_i)b_i
+
2\lambda_z z.
\]
Rearranging,
\[
\nabla J(z)
=
2\left(\sum_{i\in W_t}w_i b_i b_i^\top+\lambda_z I\right)z
-
2\sum_{i\in W_t}w_i r_i^H b_i.
\]
Using the definitions
\[
C_t=\sum_{i\in W_t}w_i b_i b_i^\top,
\qquad
u_t=\sum_{i\in W_t}w_i r_i^H b_i,
\]
the first-order condition \(\nabla J(z)=0\) becomes
\[
(C_t+\lambda_z I)z=u_t.
\]
Since \(\lambda_z>0\), the matrix \(C_t+\lambda_z I\) is positive definite. Indeed, for any nonzero \(x\),
\[
x^\top(C_t+\lambda_z I)x
=
\sum_{i\in W_t}w_i(x^\top b_i)^2+\lambda_z\|x\|_2^2
>
0.
\]
Thus \(C_t+\lambda_z I\) is invertible, and the unique minimizer is
\[
\widehat z_t
=
(C_t+\lambda_z I)^{-1}u_t.
\]
\end{proof}

\begin{theorem}[Scalar optimality of the finite-horizon vector HJB]
\label{thm:vector-hjb-scalar-optimality-app}
Fix \(z\) and \(\xi\). Assume that the inventory grid and safe action set are finite. Let \(U_0(q)=0\), and for \(n=1,\ldots,N\), define
\[
F_n(q,a,z;\xi)
=
\bar\phi(q,a;\xi)
+
\beta\,\mathbb E_\xi[U_{n-1}(q')\mid q,a],
\]
\[
a_n^\star(q)
\in
\arg\max_{a\in\mathcal A_{\mathrm{safe}}}
F_n(q,a,z;\xi)^\top z,
\]
and
\[
U_n(q)=F_n(q,a_n^\star(q),z;\xi).
\]
Then
\[
V_n(q):=U_n(q)^\top z
\]
is the optimal \(n\)-step value function for the scalar reward \(r_z=z^\top\phi\).
\end{theorem}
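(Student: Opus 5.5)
We argue by induction on $n$, using linearity of the inner product and of conditional expectation to show that projecting the vector recursion onto $z$ reproduces the scalar Bellman recursion exactly. The standing interpretation is that $\bar\phi(q,a;\xi)=\mathbb E_\xi[\phi(s,a,s')\mid q,a]$ is the one-step expected feature vector. Hence $\bar\phi(q,a;\xi)^\top z=\mathbb E_\xi[r_z\mid q,a]=:\bar r_z(q,a)$ is the expected one-step scalar reward. The optimal $n$-step scalar value is defined by the standard finite-horizon dynamic programming recursion
\[
V_0^\star(q)=0,\qquad V_n^\star(q)=\max_{a\in\mathcal A_{\mathrm{safe}}}\Big\{\bar r_z(q,a)+\beta\,\mathbb E_\xi[V_{n-1}^\star(q')\mid q,a]\Big\}.
\]
Because the inventory grid and safe action set are finite, all maxima are attained and all expectations are finite sums. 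The recursion is therefore well defined, and by the usual finite-horizon Bellman principle $V_n^\star$ is the supremum of expected discounted $n$-step reward over all (possibly history-dependent) policies. We take this principle as standard.

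The base case is immediate: $V_0(q)=U_0(q)^\top z=0=V_0^\star(q)$. For the inductive step, assume $V_{n-1}=V_{n-1}^\star$. Taking the inner product of the definition of $F_n$ with $z$ and moving $z$ inside the (finite-sum) expectation gives
\[
F_n(q,a,z;\xi)^\top z=\bar r_z(q,a)+\beta\,\mathbb E_\xi[U_{n-1}(q')^\top z\mid q,a]=\bar r_z(q,a)+\beta\,\mathbb E_\xi[V^\star_{n-1}(q')\mid q,a].
\]
The right-hand side is exactly the scalar $Q$-function $Q_n^\star(q,a)$ of the $n$-step problem. Since $a_n^\star(q)$ maximizes $F_n^\top z$ over $\mathcal A_{\mathrm{safe}}$, it maximizes $Q_n^\star(q,\cdot)$. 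Therefore
\[
V_n(q)=U_n(q)^\top z=F_n(q,a_n^\star(q),z;\xi)^\top z=\max_a Q_n^\star(q,a)=V_n^\star(q),
\]
which closes the induction. As a by-product, the greedy selectors $a_n^\star,a_{n-1}^\star,\dots$ form an optimal Markov policy whose scalar value is $V_n$.

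The main obstacle is not the algebra but pinning down the modeling conventions, so that ``optimal $n$-step value function'' has a precise meaning matching the recursion. Two points need care.

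\textbf{The conditioning state.} The transition kernel and $\bar\phi$ must depend on the state only through $q$ for fixed $\xi$. Given the Avellaneda--Stoikov dynamics with exponential fills, $\mathbb E[\Delta W\mid q,a]$ is indeed independent of $m$ because $dm=\sigma\,dW$ is a martingale, so $q$ is a sufficient state. The adverse-selection feature is taken as given by $\xi$.

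\textbf{Exchanging $z$ with the expectation.} This is justified by finiteness of the grid, which also makes the expectations finite sums.

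I would state both conventions explicitly at the start of the proof. If one wants optimality over history-dependent policies rather than just the dynamic programming recursion, I would add the standard one-line argument that the finite-horizon Bellman recursion characterizes the optimum for finite MDPs.
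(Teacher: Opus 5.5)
Your proof is correct and is essentially the paper's argument: induction on $n$ from $U_0=0$. Linearity of the inner product and expectation turns $F_n(q,a,z;\xi)^\top z$ into $\bar r_z(q,a;\xi)+\beta\,\mathbb E_\xi[V_{n-1}(q')\mid q,a]$, so the argmax defining $a_n^\star$ reproduces the scalar Bellman optimality recursion. Your extra remarks (that $q$ is a sufficient state for fixed $\xi$, and optimality over history-dependent policies via the standard finite-horizon Bellman principle) make explicit conventions the paper leaves implicit.
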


\begin{proof}
We prove the result by induction on \(n\).

For \(n=0\), there are no future rewards, so
\[
U_0(q)=0.
\]
Therefore,
\[
V_0(q)=U_0(q)^\top z=0,
\]
which is the correct zero-horizon scalar value.

Now suppose the claim holds for horizon \(n-1\). That is,
\[
V_{n-1}(q)=U_{n-1}(q)^\top z
\]
is the optimal \((n-1)\)-step scalar value. For any candidate action \(a\),
\[
F_n(q,a,z;\xi)^\top z
=
\left[
\bar\phi(q,a;\xi)
+
\beta\,\mathbb E_\xi[U_{n-1}(q')\mid q,a]
\right]^\top z.
\]
Using linearity of the inner product and expectation,
\[
F_n(q,a,z;\xi)^\top z
=
\bar\phi(q,a;\xi)^\top z
+
\beta\,\mathbb E_\xi[U_{n-1}(q')^\top z\mid q,a].
\]
Since \(r_z=z^\top\phi\), the expected immediate scalar reward is
\[
\bar r_z(q,a;\xi)
=
\bar\phi(q,a;\xi)^\top z.
\]
By the induction hypothesis,
\[
U_{n-1}(q')^\top z=V_{n-1}(q').
\]
Thus
\[
F_n(q,a,z;\xi)^\top z
=
\bar r_z(q,a;\xi)
+
\beta\,\mathbb E_\xi[V_{n-1}(q')\mid q,a].
\]
The action \(a_n^\star(q)\) is chosen to maximize this quantity over \(a\in\mathcal A_{\mathrm{safe}}\). Therefore,
\[
V_n(q)
=
U_n(q)^\top z
=
\max_{a\in\mathcal A_{\mathrm{safe}}}
\left\{
\bar r_z(q,a;\xi)
+
\beta\,\mathbb E_\xi[V_{n-1}(q')\mid q,a]
\right\}.
\]
This is exactly the Bellman optimality recursion for the \(n\)-step scalar reward problem. Since the safe action set is finite, the maximum exists. Hence \(V_n(q)=U_n(q)^\top z\) is the optimal \(n\)-step scalar value.
\end{proof}

\begin{theorem}[Closed-form Avellaneda--Stoikov quote under exponential fills]
\label{thm:as-closed-form-app}
Let \(A>0\), \(\kappa>0\), \(v>0\), continuation-value difference \(D\), and adverse-selection cost \(c\). For the one-sided objective
\[
J(\delta)
=
Ae^{-\kappa\delta}(v\delta+D-c),
\]
the unconstrained maximizer is
\[
\delta^\star
=
\frac{1}{\kappa}
-
\frac{D}{v}
+
\frac{c}{v}.
\]
Consequently, with
\[
D^b=h(q+v)-h(q),
\qquad
D^a=h(q-v)-h(q),
\]
the bid and ask quote distances are
\[
\delta^{b,\star}
=
\frac{1}{\kappa^b}
-
\frac{h(q+v)-h(q)}{v}
+
\frac{c^b}{v},
\]
and
\[
\delta^{a,\star}
=
\frac{1}{\kappa^a}
-
\frac{h(q-v)-h(q)}{v}
+
\frac{c^a}{v}.
\]
\end{theorem}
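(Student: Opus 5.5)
The plan is to treat this as a one-dimensional calculus problem: find the unique critical point of $J$, show it is the global maximizer, and then specialize to the bid and ask sides.

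\emph{Derivative and critical point.} Write $J(\delta)=Ae^{-\kappa\delta}g(\delta)$ with $g(\delta)=v\delta+D-c$ affine. By the product rule,
\[
J'(\delta)=Ae^{-\kappa\delta}\bigl(v-\kappa g(\delta)\bigr).
\]
Since $A>0$ and the exponential is positive, $J'(\delta)=0$ exactly when $g(\delta)=v/\kappa$. That is,
\[
v\delta+D-c=\frac{v}{\kappa}.
\]
Dividing by $v>0$ gives the unique critical point
\[
\delta^\star=\frac{1}{\kappa}-\frac{D}{v}+\frac{c}{v}.
\]

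\emph{Global maximality.} Uniqueness of the critical point alone does not give a maximum, so I will use the sign of $J'$. The factor $v-\kappa g(\delta)$ is affine in $\delta$ with slope $-\kappa v<0$. Hence $J'>0$ for $\delta<\delta^\star$ and $J'<0$ for $\delta>\delta^\star$. So $J$ is strictly increasing and then strictly decreasing, and $\delta^\star$ is the unique global maximizer on $\mathbb R$.

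As a cross-check, $J(\delta)\to-\infty$ as $\delta\to-\infty$, and $J(\delta)\to0^+$ as $\delta\to\infty$. Also $J(\delta^\star)=Ae^{-\kappa\delta^\star}v/\kappa>0$. Optionally one can note $J''(\delta^\star)=-A\kappa v e^{-\kappa\delta^\star}<0$.

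\emph{Specialization.} Under the fill model, the expected contribution of the bid side is its intensity $A^be^{-\kappa^b\delta^b}$ times the fill payoff $v\delta^b+h(q+v)-h(q)-c^b$. The ask side is symmetric with $h(q-v)-h(q)$ and $c^a$. Applying the one-sided result with
\[
(A,\kappa,D,c)=(A^b,\kappa^b,D^b,c^b)
\quad\text{and}\quad
(A,\kappa,D,c)=(A^a,\kappa^a,D^a,c^a)
\]
gives the two stated formulas. No earlier result is needed beyond the definition of $h$ as $U_z^\top z$ (justified by Theorem~\ref{thm:vector-hjb-scalar-optimality-app}).

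\emph{Main obstacle.} The only genuine step is making precise that the bid and ask optimizations decouple into separate one-sided problems. I would state this as part of the one-sided objective: to first order in $\Delta t$, at most one fill occurs per step. The Hamiltonian is therefore a sum of a term depending only on $\delta^b$ and a term depending only on $\delta^a$, and each can be maximized separately.

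The word "unconstrained" also matters: if $\delta$ is restricted to $[\delta_{\min},\delta_{\max}]$, unimodality implies the constrained optimum is the clipped value of $\delta^\star$. I would add this as a remark.
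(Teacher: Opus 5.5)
Your proposal is correct and follows the paper's proof essentially step for step: differentiate to get $J'(\delta)=Ae^{-\kappa\delta}\bigl(v-\kappa(v\delta+D-c)\bigr)$, solve the first-order condition, use that the bracket is strictly decreasing so $J$ increases and then decreases, and substitute $(\kappa^b,D^b,c^b)$ and $(\kappa^a,D^a,c^a)$. Your extra remarks are sound but go beyond what the paper does: the limits of $J$, $J''(\delta^\star)<0$, the first-order-in-$\Delta t$ decoupling of the bid and ask sides, and clipping under quote bounds. The paper simply takes the one-sided objective as given and states the clipping separately after the proof.
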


\begin{proof}
Differentiate
\[
J(\delta)=Ae^{-\kappa\delta}(v\delta+D-c)
\]
with respect to \(\delta\). We obtain
\[
J'(\delta)
=
Ae^{-\kappa\delta}
\left[
v-\kappa(v\delta+D-c)
\right].
\]
Since \(Ae^{-\kappa\delta}>0\), the first-order condition is
\[
v-\kappa(v\delta+D-c)=0.
\]
Solving for \(\delta\),
\[
\kappa v\delta
=
v-\kappa D+\kappa c,
\]
and therefore
\[
\delta^\star
=
\frac{1}{\kappa}
-
\frac{D}{v}
+
\frac{c}{v}.
\]

The term
\[
v-\kappa(v\delta+D-c)
\]
is strictly decreasing in \(\delta\). Hence \(J'(\delta)>0\) for \(\delta<\delta^\star\) and \(J'(\delta)<0\) for \(\delta>\delta^\star\). Therefore \(J\) increases up to \(\delta^\star\) and decreases after \(\delta^\star\), so \(\delta^\star\) is the unconstrained maximizer.

For the bid side, a fill moves inventory from \(q\) to \(q+v\). Hence the continuation-value difference is
\[
D^b=h(q+v)-h(q).
\]
Substituting \(D=D^b\), \(\kappa=\kappa^b\), and \(c=c^b\) gives
\[
\delta^{b,\star}
=
\frac{1}{\kappa^b}
-
\frac{h(q+v)-h(q)}{v}
+
\frac{c^b}{v}.
\]
For the ask side, a fill moves inventory from \(q\) to \(q-v\). Hence
\[
D^a=h(q-v)-h(q).
\]
Substituting \(D=D^a\), \(\kappa=\kappa^a\), and \(c=c^a\) gives
\[
\delta^{a,\star}
=
\frac{1}{\kappa^a}
-
\frac{h(q-v)-h(q)}{v}
+
\frac{c^a}{v}.
\]
This proves the closed-form quote equations.
\end{proof}

\paragraph{Remark on quote constraints.}
If quote distances are constrained to an interval
\[
\delta\in[\delta_{\min},\delta_{\max}],
\]
then the constrained quote is simply the clipped value
\[
\delta_{\mathrm{safe}}^\star
=
\Pi_{[\delta_{\min},\delta_{\max}]}
\left(
\frac{1}{\kappa}
-
\frac{D}{v}
+
\frac{c}{v}
\right).
\]
This corresponds to the hard safety constraints used by the online implementation.

\section{Implementation Details}
\label{sec:implementation}
\subsection{Rolling FB-AS HJB Method}
\label{subsec:rolling-fb-as-hjb-method}
Our method is a rolling forward-backward Avellaneda--Stoikov market maker with a
finite-horizon inventory-grid HJB. It is designed to match the paper's online
architecture:
\[
L_t \longmapsto \xi_t,
\qquad
\mathcal W_t \longmapsto z_t,
\qquad
a_t^\star
=
\arg\max_{a\in\mathcal A_{\mathrm{safe}}}
F_{\mathrm{HJB}}(q_t,\tau_t,a,z_t;\xi_t)^\top z_t .
\]
Here \(L_t\) denotes the recent limit-order-book state, \(\xi_t\) denotes the
local market parameters, \(\mathcal W_t\) is a rolling window of completed
markout observations, \(z_t\) is the adaptive objective vector, and
\(F_{\mathrm{HJB}}\) is the forward HJB feature map used for action selection.

At each decision time, the strategy estimates local volatility, order-arrival
intensities, depth-decay parameters, and adverse-selection proxies from recent
data. These quantities define the local Avellaneda--Stoikov parameter vector
\[
\xi_t
=
\left(
\sigma_t,
A_t^b,
A_t^a,
\kappa_t^b,
\kappa_t^a,
c_t^b,
c_t^a
\right).
\]
The action is a pair of bid and ask quote distances, $a_t=(\delta_t^b,\delta_t^a)$, 
chosen from a discrete safe action grid. The safe action set
\(\mathcal A_{\mathrm{safe}}\) enforces the Avellaneda--Stoikov quote floor,
post-only constraints, and prospective inventory limits.

For a candidate action \(a=(\delta^b,\delta^a)\), the fill intensities are
\[
\lambda^b(\delta^b)
=
A_t^b e^{-\kappa_t^b\delta^b},
\qquad
\lambda^a(\delta^a)
=
A_t^a e^{-\kappa_t^a\delta^a}.
\]
Over one HJB time step \(\Delta t\), the corresponding fill probabilities are
\[
p_b
=
1-\exp\!\left(-\lambda^b(\delta^b)\Delta t\right),
\qquad
p_a
=
1-\exp\!\left(-\lambda^a(\delta^a)\Delta t\right).
\]
If a side is disabled by the inventory constraint, its corresponding fill
probability is set to zero.

The immediate expected feature vector is
\[
\bar\phi(q,a;\xi_t)
=
\mathbb E_{\xi_t}
\left[
\begin{pmatrix}
\Delta W_{t+1}\\
q_{t+1}\\
q_{t+1}^2\\
\mathrm{adv}_{t+1}
\end{pmatrix}
\middle| q_t=q,\ a_t=a
\right].
\]
In the implementation, the first coordinate is the expected spread/wealth
capture, while the adverse-selection term is kept as a separate proxy feature.
This avoids subtracting adverse cost inside the PnL coordinate and then
penalizing it again through \(z_{\mathrm{adv}}\).

Inventory evolves according to
\[
q' = q + N_b - N_a,
\]
where \(N_b\) and \(N_a\) are independent Bernoulli fill indicators with
probabilities \(p_b\) and \(p_a\). Therefore the inventory-risk coordinate uses
\[
\mathbb E[(q')^2]
=
q^2
+
2q(p_b-p_a)
+
p_b+p_a-2p_bp_a,
\]
rather than \((\mathbb E[q'])^2\).

\subsubsection{Forward HJB representation}

For the forward component, we compute a finite-horizon inventory-grid HJB
recursion. The value representation is vector-valued: each inventory state
stores expected future feature occupancy rather than a scalar value. We set the
terminal feature vector to zero,
\[
U_0(q;z_t,\xi_t)=0,
\]
and for \(n=1,\ldots,N\) compute
\[
F_n(q,a,z_t;\xi_t)
=
\bar\phi(q,a;\xi_t)
+
\beta
\mathbb E_{\xi_t}
\left[
U_{n-1}(q';z_t,\xi_t)
\mid q,a
\right],
\]
\[
a_n^\star(q)
=
\arg\max_{a\in\mathcal A_{\mathrm{safe}}}
F_n(q,a,z_t;\xi_t)^\top z_t,
\]
\[
U_n(q;z_t,\xi_t)
=
F_n(q,a_n^\star(q),z_t;\xi_t).
\]
The live policy uses the final HJB layer:
\[
a_t^\star
=
\arg\max_{a\in\mathcal A_{\mathrm{safe}}}
F_N(q_t,a,z_t;\xi_t)^\top z_t .
\]

The transition expectation includes four independent fill outcomes:
\[
p_{00}=(1-p_b)(1-p_a),
\qquad
p_{10}=p_b(1-p_a),
\]
\[
p_{01}=(1-p_b)p_a,
\qquad
p_{11}=p_bp_a .
\]
The both-fill event returns inventory to the same state \(q\). Therefore the
continuation expectation is evaluated as
\[
\mathbb E[U_{n-1}(q')]
=
(p_{00}+p_{11})U_{n-1}(q)
+
p_{10}U_{n-1}(q+1)
+
p_{01}U_{n-1}(q-1).
\]
In each online HJB solve, the current market parameter vector \(\xi_t\) is held
fixed over the finite horizon. This is a receding-horizon approximation to the
nonstationary market-making problem.

\subsubsection{Backward objective estimation}

The backward component estimates the adaptive objective vector \(z_t\) from
realized markout rewards. We use the objective coordinates
\[
z_t
=
\left(
1,
z_{q,t},
z_{q^2,t},
z_{\mathrm{adv},t}
\right),
\]
fixing \(z_{\mathrm{pnl},t}=1\) as in the safe objective family. The remaining
coordinates are estimated from a rolling window of completed fill observations.

For each filled quote \(i\), we store a fill-conditional backward row $
b_i
=
\left(
b_{0,i},
b_{1,i},
b_{2,i},
b_{3,i}
\right),
$
where \(b_{0,i}\) is spread/wealth capture, \(b_{1,i}\) is next inventory,
\(b_{2,i}\) is squared next inventory with the same risk scaling used in the
forward map, and \(b_{3,i}\) is the adverse-cost proxy. After a fixed markout
horizon \(H\), the realized label is
\[
r_i^H
=
N_i^b v_i^b
\left(
\delta_i^b + m_{i+H}-m_i
\right)
+
N_i^a v_i^a
\left(
\delta_i^a - (m_{i+H}-m_i)
\right).
\]
In this compact prototype, fee and impact terms are omitted from the adaptive
reward labels, although the backtest performance accounting uses the common
simulator fee model.

Because the wealth coordinate is fixed to one, we estimate only the residual
objective coordinates by ridge regression:
\[
r_i^H-b_{0,i}
=
z_{q,t} b_{1,i}
+
z_{q^2,t} b_{2,i}
+
z_{\mathrm{adv},t} b_{3,i}
+
\epsilon_i .
\]
Let
\[
x_i=
\begin{pmatrix}
b_{1,i}\\
b_{2,i}\\
b_{3,i}
\end{pmatrix},
\qquad
y_i=r_i^H-b_{0,i}.
\]
The rolling ridge estimate is
$
\widehat z_{\mathrm{rest},t}
=
(C_t+\lambda_z I)^{-1}u_t,
$
where
\[
C_t=\sum_{i\in\mathcal W_t} w_{t,i}x_i x_i^\top,
\qquad
u_t=\sum_{i\in\mathcal W_t} w_{t,i}y_i x_i .
\]
The estimate is mixed with a safe prior and projected into an economically
meaningful set:
\[
z_{q^2,t}\le 0,
\qquad
z_{\mathrm{adv},t}\le 0,
\qquad
|\theta_t|\le Q_{\max}.
\]
The implied target inventory is $
\theta_t
=
\frac{z_{q,t}}
{-2z_{q^2,t}},
$
with the implementation using the corresponding normalized and risk-scaled
version of this expression. This projection prevents short-run noisy reward
observations from producing risk-seeking objectives.

\end{document}